\def\y{{\bf y}}
\def\x{{\bf x}}
\def\x{{\mathbf x}}
\def\x{{\bf x}}
\def\y{{\bf y}}
\def\be{\begin{equation}}
\def\ee{\end{equation}}
\def\ba{\left[\begin{array}}
\def\ea{\end{array}\right]}
\def\inter{{C_{int}}}
\def\exter{{C_{ext}}}
\def\com{{C_{com}}}
\def\psiint{{\Psi_{int}}}
\def\psiext{{\Psi_{ext}}}
\def\psicom{{\Psi_{com}}}
\def\psinet{{\Psi_{net}}}
\def\internon{{C_{int}^{+}}}
\def\externon{{C_{ext}^{+}}}
\def\comnon{{C_{com}^{+}}}
\def\psiintnon{{\Psi_{int}^{+}}}
\def\psiextnon{{\Psi_{ext}^{+}}}
\def\psicomnon{{\Psi_{com}^{+}}}
\def\psinetnon{{\Psi_{net}^{+}}}
\def\x{{\bf x}}
\def\y{{\bf y}}
\def\1{{\bf 1}}
\def\0{{\bf 0}}
\def\erf{\mbox{erf}}
\def\erfinv{\mbox{erfinv}}
\newtheorem{theorem}{Theorem}
\begin{document}

\begin{singlespace}

\title {A rigorous geometry-probability equivalence in characterization of $\ell_1$-optimization
}
\author{
\textsc{Mihailo Stojnic}
\\
\\
{School of Industrial Engineering}\\
{Purdue University, West Lafayette, IN 47907} \\
{e-mail: {\tt mstojnic@purdue.edu}} }
\date{}
\maketitle

\centerline{{\bf Abstract}} \vspace*{0.1in}

In this paper we consider under-determined systems of linear equations that have sparse solutions. This subject attracted enormous amount of interest in recent years primarily due to influential works \cite{CRT,DonohoPol}. In a statistical context it was rigorously established for the first time in \cite{CRT,DonohoPol} that if the number of equations is smaller than but still linearly proportional to the number of unknowns then a sparse vector of sparsity also linearly proportional to the number of unknowns can be recovered through a polynomial $\ell_1$-optimization algorithm (of course, this assuming that such a sparse solution vector exists). Moreover,
the geometric approach of \cite{DonohoPol} produced the exact values for the proportionalities in question. In our recent work \cite{StojnicCSetam09} we introduced an alternative statistical approach that produced attainable values of the proportionalities. Those happened to be in an excellent numerical agreement with the ones of \cite{DonohoPol}. In this paper we give a rigorous analytical confirmation that the results of \cite{StojnicCSetam09} indeed match those from \cite{DonohoPol}.

\vspace*{0.25in} \noindent {\bf Index Terms: Linear systems; Neighborly polytopes;
$\ell_1$-optimization} .

\end{singlespace}

\section{Introduction}
\label{sec:back}

The main concern of this paper is an analytical study of under-determined systems of linear equations that have sparse solutions. To that end, let us assume that there is a $k$-sparse $n$ dimensional vector $\x$ such that
\begin{equation}
\y=A\x \label{eq:system}
\end{equation}
for an $m\times n$ ($m<n$) statistical matrix $A$ and an $m\times 1$ vector $\y$ (see Figure
\ref{fig:model}; here and in the rest of the paper, under $k$-sparse vector we assume a vector that has at most $k$ nonzero
components; also, in the rest of the paper we will assume the
so-called \emph{linear} regime, i.e. we will assume that $k=\beta n$
and that the number of the equations is $m=\alpha n$ where
$\alpha$ and $\beta$ are constants independent of $n$ (more
on the non-linear regime, i.e. on the regime when $m$ is larger than
linearly proportional to $k$ can be found in e.g.
\cite{CoMu05,GiStTrVe06,GiStTrVe07}). We then look at the inverse problem: given the $A$ and $\y$ from (\ref{eq:system}) can one then recover the $k$-sparse $\x$ in (\ref{eq:system}).
\begin{figure}[htb]
\centering
\centerline{\epsfig{figure=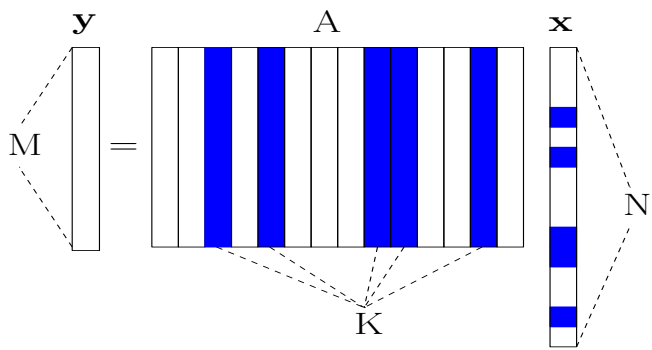,width=9cm,height=4.5cm}}
\caption{Model of a linear system; vector $\x$ is $k$-sparse}
\label{fig:model}
\end{figure}

There are of course many ways how one can attempt to recover the $k$-sparse $\x$. If one has the freedom to design $A$ in parallel with designing the recovery algorithm then the results from \cite{FHicassp,Tarokh,MaVe05} demonstrated that the techniques from
coding theory (based on the coding/decoding of Reed-Solomon codes)
can be employed to determine \emph{any} $k$-sparse $\x$ in
(\ref{eq:system}) for any $0<\alpha\leq 1$ and any
$\beta\leq\frac{\alpha}{2}$ in polynomial time. It is relatively easy to show that under the unique recoverability assumption
$\beta$ can not be greater than $\frac{\alpha}{2}$. Therefore, as long as one is concerned with the unique recovery of
$k$-sparse $\x$ in (\ref{eq:system}) in polynomial time the results from \cite{FHicassp,Tarokh,MaVe05} are
optimal. The complexity of algorithms from
\cite{FHicassp,Tarokh,MaVe05} is roughly $O(n^3)$. In a similar fashion one can, instead of using coding/decoding techniques associated with Reed/Solomon codes,
design matrix $A$ and the corresponding recovery algorithm based on the techniques related to the coding/decoding of
Expander codes (see e.g.
\cite{XHexpander,JXHC08,InRu08} and references therein). In that case recovering $\x$ in
(\ref{eq:system}) is significantly faster for large dimensions $n$. Namely, the complexity of the techniques from e.g. \cite{XHexpander,JXHC08,InRu08}
(or their slight modifications) is usually
$O(n)$ which is clearly for large $n$ significantly smaller than $O(n^3)$. However,
the techniques based on coding/decoding of Expander codes usually do not allow for $\beta$ to be as large as
$\frac{\alpha}{2}$.

If one has no freedom in the choice of the matrix $A$ (instead the matrix $A$ is rather given to us) then the recovery
problem (\ref{eq:system}) becomes NP-hard. The following two algorithms (and their different
variations) are then of special interest (and certainly have been the subject of an extensive research in recent years):
\begin{enumerate}
\item \underline{\emph{Orthogonal matching pursuit - OMP}}
\item \underline{\emph{Basis pursuit -
$\ell_1$-optimization.}}
\end{enumerate}
Under certain probabilistic assumptions on the elements of $A$ it can be shown (see e.g. \cite{JATGomp,JAT,NeVe07})
that if $m=O(k\log(n))$
OMP (or slightly modified OMP) can recover $\x$ in (\ref{eq:system})
with complexity of recovery $O(n^2)$. On the other hand a stage-wise
OMP from \cite{DTDSomp} recovers $\x$ in (\ref{eq:system}) with
complexity of recovery $O(n \log n)$. Somewhere in between OMP and BP are recent improvements CoSAMP (see e.g. \cite{NT08}) and Subspace pursuit (see e.g. \cite{DaiMil08}), which guarantee (assuming the linear regime) that the $k$-sparse $\x$ in (\ref{eq:system}) can be recovered in polynomial time with $m=O(k)$ equations.

In this paper we will focus on the second of the two above mentioned algorithms, i.e. we will focus on the performance of $\ell_1$-optimization. (Variations of the standard $\ell_1$-optimization from e.g.
\cite{CWBreweighted,SChretien08,SaZh08}) as well as those from \cite{SCY08,FL08,GN03,GN04,GN07,DG08} related to $\ell_q$-optimization, $0<q<1$
are possible as well.) Basic $\ell_1$-optimization algorithm finds $\x$ in
(\ref{eq:system}) by solving the following $\ell_1$-norm minimization problem
\begin{eqnarray}
\mbox{min} & & \|\x\|_{1}\nonumber \\
\mbox{subject to} & & A\x=\y. \label{eq:l1}
\end{eqnarray}

In seminal work \cite{CRT}, it was established that for any constant $\alpha\in (0,1)$ and $m=\alpha n$ there is a constant $\beta\in (0,\alpha)$ and $k=\beta n$ such that the solution of (\ref{eq:l1}) is with overwhelming probability the $k$-sparse $\x$ in (\ref{eq:system}) (moreover, this remains true for \emph{any} $k$-sparse $\x$). (Under overwhelming probability we in this paper assume
a probability that is no more than a number exponentially decaying in $n$ away from $1$.) The results of \cite{CRT} rested on having  matrix $A$ satisfy the restricted isometry property (RIP) which is only a \emph{sufficient}
condition for $\ell_1$-optimization to produce the solution of
(\ref{eq:system}) (more on RIP and its importance can be found in e.g. \cite{Crip,CT,Bar,Ver,ALPTJ09}).

Instead of characterizing the $m\times n$ matrix $A$ through the RIP
condition, in \cite{DonohoUnsigned,DonohoPol} Donoho associates
certain polytope with the matrix $A$. Namely,
\cite{DonohoUnsigned,DonohoPol} consider polytope obtained by
projecting the regular $n$-dimensional cross-polytope by $A$. It turns out that a \emph{necessary and sufficient}
condition for (\ref{eq:l1}) to produce the $k$-sparse solution of
(\ref{eq:system}) is that this polytope associated with the matrix
$A$ is $k$-neighborly
\cite{DonohoUnsigned,DonohoPol,DonohoSigned,DT}. Using the results
of \cite{PMM,AS,BorockyHenk,Ruben,VS,Santalo}, it is further shown in
\cite{DonohoPol}, that if $A$ is a random $m\times n$
ortho-projector matrix then with overwhelming probability polytope
obtained projecting the standard $n$-dimensional cross-polytope by
$A$ is $k$-neighborly. The precise relation between $m$ and $k$ in
order for this to happen is characterized in
\cite{DonohoPol,DonohoUnsigned} as well.

It should be noted that one usually considers success of
(\ref{eq:l1}) in recovering \emph{any} given $k$-sparse $\x$ in (\ref{eq:system}). It is also of interest to consider success of
(\ref{eq:l1}) in recovering
\emph{almost any} given $\x$ in (\ref{eq:system}). We below make a distinction between these
cases and recall on some of the definitions from
\cite{DonohoPol,DT,DTciss,DTjams2010,StojnicCSetam09,StojnicICASSP09}.

Clearly, for any given constant $\alpha\leq 1$ there is a maximum
allowable value of $\beta$ such that for \emph{any} given $k$-sparse $\x$ in (\ref{eq:system}) the solution of (\ref{eq:l1})
is exactly that given $k$-sparse $\x$  with overwhelming probability. We will refer to this maximum allowable value of
$\beta$ as the \emph{strong threshold} (see
\cite{DonohoPol}). Similarly, for any given constant
$\alpha\leq 1$ and \emph{any} given $\x$ with a given fixed location of non-zero components and a given fixed combination of its elements signs
there will be a maximum allowable value of $\beta$ such that
(\ref{eq:l1}) finds that given $\x$ in (\ref{eq:system}) with overwhelming
probability. We will refer to this maximum allowable value of
$\beta$ as the \emph{weak threshold} and will denote it by $\beta_{w}$ (see, e.g. \cite{StojnicICASSP09,StojnicCSetam09}). In this paper we will provide a rigorous proof that $\beta_w$ one can determine through Donoho's framework from \cite{DonohoPol} is exactly the same as $\beta_w$ determined in \cite{StojnicCSetam09}.

We organize the rest of the paper in the following way. In Section
\ref{sec:unsigned} we will first recall on the basic ingredients of the analysis done in \cite{DonohoPol}. Using the insights from \cite{StojnicCSetam09} we will then give a closed formula for $\beta_w$ computed in \cite{DonohoPol}. As hinted above, this formula will match the one computed in \cite{StojnicCSetam09}.
In Section \ref{sec:signed} we will then specialize the results from Section \ref{sec:unsigned} to the case when the nonzero components of sparse vector $\x$ in (\ref{eq:system}) are positive (or in general with \emph{a priori} known signs). Using again the insights from \cite{StojnicCSetam09} we will then give a closed formula for $\beta_w$ computed for this case in \cite{DT}. This formula will match the corresponding one computed in \cite{StojnicCSetam09}. Finally, in Section \ref{sec:discuss} we discuss obtained results.

\section{General $\x$}
\label{sec:unsigned}

\subsection{Success of $\ell_1$ and neighborliness of projected cross-polytope}
\label{sec:l1neigh}

In this section we show that the weak thresholds obtained in \cite{StojnicCSetam09} are the same as the ones obtained in \cite{DonohoPol}. To that end, we start by recalling on the basics of the analysis from \cite{DonohoUnsigned,DonohoPol}. In his, now legendary, paper \cite{DonohoUnsigned} Donoho took a geometric approach to the performance analysis of $\ell_1$-optimization and managed to connect the performance analysis of $\ell_1$-optimization to the concepts of polytope's neighborliness. The main recognition went along the following lines:
1) Let $C_p^n$ be the regular $n$-dimensional cross-polytope and let the $AC_p^n$ be the polytope one obtains after projecting $C_p^n$ by
$A$; 2) Then the solution of (\ref{eq:l1}) will be exactly the $k$-sparse solution of
(\ref{eq:system}) if and only if polytope $AC_p^n$ is centrally $k$-neighborly (more on the definitions, importance, and many incredible properties of neighborliness can be found in e.g. \cite{DonohoPol,Grunbaum03}). Here we just briefly recall on the basic definitions of neighborliness and central-neighborliness from \cite{DonohoPol}. Namely, a polytope is $k$-neighborly if its every $k+1$ vertices span its a $k$ dimensional face. On the other hand a polytope is centrally $k$-neighborly if its every $k+1$ vertices that do not include any antipodal pair span its a $k$ dimensional face.

The above characterization then enables one to replace studying the success of $\ell_1$-optimization in solving an under-determined system by studying the neighborliness of projected cross-polytopes. Of course, a priori, it is not really clear that the latter problem is any easier then the former one. However, it turns out that it has been explored to some extent in the literature on the geometry of random high-dimensional polytopes. Using the ``sum of angles" result from \cite{AS} (which at its core relies on \cite{PMM,Santalo}) it was established in \cite{DonohoPol} that if $A$ is a random ortho-projector $AC_p^n$ will be centrally $k$-neighborly with overwhelming probability if
\begin{equation}
n^{-1}\log(\com\inter(T^{k},T^{m})\exter(F^{m},C_p^n))<0 \label{eq:conddonpol}
\end{equation}
where $\com=2^{m-k}\binom{n-k-1}{m-k}$, $\inter(T^{k},T^{m})$ is the internal angle at face $T^{k}$ of $T^{m}$, $\exter(F^{m},C_p^n)$ is the external angle of $C_p^n$ at any $m$-dimensional face $F^m$, and $T^{k}$ and $T^{m}$ are the standard $k$ and $m$ dimensional simplices, respectively (more on the definitions and meaning of the internal and external angles can be found in e.g. \cite{Grunbaum03}). Donoho then proceeded by establishing that (\ref{eq:conddonpol}) is equivalent to the following inequality related to the sum/difference of the exponents of $\com,\inter$, and $\exter$:
\begin{equation}
\psinet=\psicom-\psiint-\psiext<0 \label{eq:conddonpolexp}
\end{equation}
where
\begin{eqnarray}
\psicom & = & n^{-1}\log(\com)=(\alpha-\beta) \log(2)+(1-\beta)H(\frac{\alpha-\beta}{1-\beta})\nonumber \\
\psiint & = & n^{-1}\log(\inter(T^{k},T^{m})) \nonumber\\
\psiext & = & n^{-1}\log(\exter(F^{m},C_p^n)) \label{eq:conddonpolexp1}
\end{eqnarray}
and $H(p)=-p\log(p)-(1-p)\log(1-p)$ is the standard entropy function and $\log\binom{n}{pn}=e^{nH(p)}$ is the standard approximation of the binomial factor by the entropy function in the limit of $n\rightarrow\infty$. The rest of the Donoho's approach is the analysis of the closed form expressions for $\inter(T^{k},T^{m})$ and $\exter(F^{m},C_p^n)$ obtained/analyzed in various forms in \cite{BorockyHenk,Ruben,Ver}. In the following two subsections we will separately consider results Donoho established for the internal and the external angle exponents. Relying on the insights from \cite{StojnicCSetam09} we will provide neat characterizations of the exponents that will eventually help us establish the equivalence of results from \cite{DonohoPol} and \cite{StojnicCSetam09}.

\subsection{Internal angle}
\label{sec:internal}

Starting from the explicit formulas for internal angles given in \cite{BorockyHenk} Donoho in \cite{DonohoPol} through a saddle-point integral computation established the following procedure for determining the exponent of the internal angle $\psiint$. Let $\gamma=\frac{\beta}{\alpha}$ and for $s\geq 0$
\begin{eqnarray}
\Phi(s) & = & \frac{1}{\sqrt{2\pi}}\int_{s}^{\infty}e^{-\frac{x^2}{2}}dx\nonumber \\
\phi(s) & = & \frac{1}{\sqrt{2\pi}}e^{-\frac{s^2}{2}}.\label{eq:phis}
\end{eqnarray}
Then one has
\begin{equation}
\psiint(\beta,\alpha)=(\alpha-\beta)\xi_{\gamma}(y_{\gamma})+(\alpha-\beta)\log(2)\label{eq:intang1}
\end{equation}
where
\begin{eqnarray}
y_{\gamma} & = & \frac{\gamma}{1-\gamma}s_{\gamma}\nonumber \\
\xi_{\gamma}(y_{\gamma}) & =  & -\frac{1}{2}y_{\gamma}^2\frac{1-\gamma}{\gamma}-\frac{1}{2}\log(\frac{2}{\pi})+\log(\frac{y_{\gamma}}{\gamma}) \label{eq:intang2}
\end{eqnarray}
and $s_{\gamma}\geq 0$ is the solution of
\begin{equation}
\Phi(s)=(1-\gamma)\frac{\phi(s)}{s}.\label{eq:intang3}
\end{equation}
Now, if one can determine $s_{\gamma}$ then a combination of (\ref{eq:intang1}) and (\ref{eq:intang2}) would give a convenient closed form expression for the exponent $\psiint(\beta,\alpha)$. Finding $s_{\gamma}$ amounts to nothing but solving (\ref{eq:intang3}) over $s$ which for an unknown $\gamma$ could be incredibly hard. At this point we will make a ``bold" guess and say that $t=\frac{1-\alpha}{1-\beta}$ and
\begin{equation}
s_{\gamma}=\sqrt{2}\erfinv(t)=\sqrt{2}\erfinv(\frac{1-\alpha}{1-\beta})\label{eq:sgamma}
\end{equation}
where $\erfinv(\cdot)$ is the inverse of the error function $\erf(\cdot)$ associated with the standard normal random variable ($\erf(r)=\frac{2}{\sqrt{\pi}}\int_0^{r}e^{-q^2}dq$).
Of course, it is rather hard to believe that $s_{\gamma}$ from (\ref{eq:sgamma}) will be the solution of (\ref{eq:intang3}) for every $\gamma=\frac{\beta}{\alpha}$. However, what we hope is that it may be the solution of (\ref{eq:intang3}) for the optimal $\gamma=\frac{\beta_w}{\alpha}$, i.e  for the one for which the net exponent, $\psinet$, in (\ref{eq:conddonpolexp}) is zero (strictly speaking, instead of ``zero" we should say ``smaller than $-\epsilon$ where $\epsilon>0$ is arbitrarily small"; in an effort to make writing and main ideas clearer we will  throughout the rest of the paper almost always ignore $\epsilon$'s). Even a hope like this is fairly out of the blue and it would require an enormous amount of intuition for one to come up with a hopeful guess like the one from (\ref{eq:sgamma}) just by staring at equations (\ref{eq:intang1}-\ref{eq:intang3}) and not knowing the results of \cite{StojnicCSetam09}.

Now what is left to do is to confirm that our guess is actually right. We start by noting that $\Phi(s)=\frac{1}{2}(1-\erf (\frac{s}{\sqrt{2}}))$.
If (\ref{eq:sgamma}) is to be correct then to satisfy (\ref{eq:intang3}) one must have
\begin{equation*}
\frac{1}{2}(1-t)=(1-\gamma)\frac{\phi(s)}{s}=(1-\gamma)\frac{1}{\sqrt{2\pi}}\frac{e^{-(\erfinv (t))^2}}{\sqrt{2}\erfinv (t)}
\end{equation*}
or in a more convenient algebraic form
\begin{equation}
\frac{1-\beta}{\alpha}\sqrt{\frac{2}{\pi}}\frac{e^{-(\erfinv (\frac{1-\alpha}{1-\beta}))^2}}{\sqrt{2}\erfinv (\frac{1-\alpha}{1-\beta})}=1.\label{eq:checkcond1}
\end{equation}
If it eventually turns out that for $\alpha$ and $\beta$ for which (\ref{eq:checkcond1}) holds one also has that $\psinet$ in (\ref{eq:conddonpolexp}) is zero then we could claim that the guess we made for $s_{\gamma}$ in (\ref{eq:sgamma}) is actually correct.

We now proceed with the evaluation of the ``internal exponent" $\psiint$ assuming that both (\ref{eq:sgamma}) and (\ref{eq:checkcond1}) are correct. Plugging (\ref{eq:sgamma}) back in (\ref{eq:intang2}) we obtain
\begin{equation}
y_{\gamma}  = \frac{\gamma}{1-\gamma}s_{\gamma}=\frac{\beta}{\alpha-\beta}\sqrt{2}\erfinv (\frac{1-\alpha}{1-\beta}).\label{eq:evalpsiint1}
\end{equation}
Combining (\ref{eq:intang2}) and (\ref{eq:evalpsiint1}) further we have
\begin{eqnarray}
\xi_{\gamma}(y_{\gamma}) & = &  -\frac{1}{2}y_{\gamma}^2\frac{1-\gamma}{\gamma}-\frac{1}{2}\log(\frac{2}{\pi})+\log(\frac{y_{\gamma}}{\gamma}) \nonumber \\
& = & -\frac{1}{2}\frac{\beta}{\alpha-\beta}(\sqrt{2}\erfinv (\frac{1-\alpha}{1-\beta}))^2-\frac{1}{2}\log(\frac{2}{\pi})+\log(\frac{\alpha}{\alpha-\beta})+
\log(\sqrt{2}\erfinv (\frac{1-\alpha}{1-\beta})).
\label{eq:evalpsiint2}
\end{eqnarray}
Finally plugging $\xi_{\gamma}(y_{\gamma})$ computed in (\ref{eq:evalpsiint2}) back in (\ref{eq:intang1}) we have for the exponent of the internal angle
\begin{multline}
\psiint=-\frac{1}{2}\beta(\sqrt{2}\erfinv (\frac{1-\alpha}{1-\beta}))^2-\frac{\alpha-\beta}{2}\log(\frac{2}{\pi})+(\alpha-\beta)\log(\alpha)\\-(\alpha-\beta)\log(\alpha-\beta)+
(\alpha-\beta)\log(\sqrt{2}\erfinv (\frac{1-\alpha}{1-\beta}))+(\alpha-\beta)\log(2).
\label{eq:evalpsiint3}
\end{multline}

\subsection{External angle}
\label{sec:external}

In this subsection we provide results for the external angle that are analogous to those provided in the previous subsection for the internal angle. In \cite{DonohoPol} Donoho established that the exponent of the external angle can be computed in the following way
\begin{equation}
\psiext(\beta,\alpha)=\min_{y\geq 0} (\alpha y^2 -(1-\alpha)\log(\erf(y))).\label{eq:extang1}
\end{equation}
It was further shown in \cite{DonohoPol} that function $(\alpha y^2 -(1-\alpha)\log(\erf(y)))$ is smooth and convex. If one could solve the above minimization analytically then there would be a neat expression for the exponent of the external angle. As in the previous section, solving this minimization does not appear as an easy task for any fixed $\alpha$ ($\beta$). However, we will again take a ``bold" guess and assume that the solution of the above minimization is
\begin{equation}
y_{ext}=\erfinv(\frac{1-\alpha}{1-\beta}).\label{eq:yextguess}
\end{equation}
It is of course unreasonable to expect that this choice of $y$ would be the solution of the minimization problem in (\ref{eq:extang1}) for every given $\alpha$. However, we do hope that it could be the solution for the optimal pair $(\alpha,\beta)$ (as stated above, the optimal pair $(\alpha,\beta)$ is the one that makes the net exponent $\psinet$ in (\ref{eq:conddonpolexp}) equal to zero). If $y_{ext}$ defined above is to be the solution of the minimization problem in (\ref{eq:extang1}) for the optimal pair $(\alpha,\beta)$ then at the very least one has to have that
\begin{equation}
\frac{d(\alpha y^2 -(1-\alpha)\log(\erf(y)))}{dy} | _{y=y_{ext}}=0.\label{eq:extang2}
\end{equation}
We proceed with checking whether (\ref{eq:extang2}) indeed holds. To that end we have:
\begin{eqnarray}
\frac{d(\alpha y^2 -(1-\alpha)\log(\erf(y)))}{dy} | _{y=y_{ext}} & = & (2\alpha y -\frac{1-\alpha}{\erf(y)}\frac{d\erf(y)}{dy})| _{y=y_{ext}}\nonumber \\
& = & 2\alpha\erfinv(\frac{1-\alpha}{1-\beta})-(1-\beta)\frac{2\sqrt{2}}{\sqrt{2\pi}}e^{-y^2}| _{y=y_{ext}}\nonumber \\
& = & \sqrt{2}\alpha\erfinv(\frac{1-\alpha}{1-\beta})-(1-\beta)\sqrt{\frac{2}{\pi}}e^{-(\erfinv(\frac{1-\alpha}{1-\beta}))^2}\nonumber \\
& = & 0\label{eq:extang3}
\end{eqnarray}
where the last equality follows by our assumption that $(\alpha,\beta)$ are optimal and therefore satisfy (\ref{eq:checkcond1}). Essentially, (\ref{eq:extang3}) shows that if (\ref{eq:checkcond1}) is correct then (\ref{eq:yextguess}) is correct as well.

Combination of (\ref{eq:extang1}) and (\ref{eq:yextguess}) then gives us the following convenient characterization of the ``external exponent" $\psiext$:
\begin{equation}
\psiext=\alpha y_{ext}^2 -(1-\alpha)\log(\erf(y_{ext})))=\alpha(\erfinv(\frac{1-\alpha}{1-\beta}))^2-(1-\alpha)\log(\frac{1-\alpha}{1-\beta}).
\label{eq:evalpsiext1}
\end{equation}

\subsection{Net exponent}
\label{sec:netexp}

In this section we combine the expressions for the ``internal" and ``external" exponents obtained in (\ref{eq:evalpsiint3}) and (\ref{eq:evalpsiext1}), respectively, with the expression for the ``combinatorial" exponent given in (\ref{eq:conddonpolexp1}). Before proceeding further with this exponent combination
we first slightly modify the expression for the combinatorial exponent given in (\ref{eq:conddonpolexp1}).
\begin{eqnarray}
\psicom & = & (\alpha-\beta) \log(2)+(1-\beta)H(\frac{\alpha-\beta}{1-\beta})\nonumber \\
& = & (\alpha-\beta) \log(2)+(1-\beta)(-\frac{\alpha-\beta}{1-\beta}\log(\frac{\alpha-\beta}{1-\beta})-(\frac{1-\alpha}{1-\beta})\log(\frac{1-\alpha}{1-\beta}))\nonumber \\
& = & (\alpha-\beta) \log(2)-(\alpha-\beta)\log(\frac{\alpha-\beta}{1-\beta})-(1-\alpha)\log(\frac{1-\alpha}{1-\beta})
\label{eq:conddonpolexp2}
\end{eqnarray}
Plugging the results from (\ref{eq:evalpsiint3}), (\ref{eq:evalpsiext1}), and (\ref{eq:conddonpolexp2}) back in (\ref{eq:conddonpolexp}) one has
\begin{eqnarray}
\psinet & = & \psicom-\psiint-\psiext\nonumber \\
& = &  (\alpha-\beta) \log(2)-(\alpha-\beta)\log(\frac{\alpha-\beta}{1-\beta})-(1-\alpha)\log(\frac{1-\alpha}{1-\beta})\nonumber \\
& - &
(-\frac{1}{2}\beta(\sqrt{2}\erfinv (\frac{1-\alpha}{1-\beta}))^2-\frac{\alpha-\beta}{2}\log(\frac{2}{\pi})\nonumber \\ & + &(\alpha-\beta)\log(\alpha)- (\alpha-\beta)\log(\alpha-\beta)+
(\alpha-\beta)\log(\sqrt{2}\erfinv (\frac{1-\alpha}{1-\beta}))+(\alpha-\beta)\log(2))\nonumber \\
& - & (\alpha(\erfinv(\frac{1-\alpha}{1-\beta}))^2-(1-\alpha)\log(\frac{1-\alpha}{1-\beta})).
\end{eqnarray}
After canceling all terms that can be canceled one finally has
\begin{eqnarray}
\psinet & = & (\alpha-\beta)\log(\frac{1-\beta}{\alpha})+\frac{\alpha-\beta}{2}\log(\frac{2}{\pi})-
(\alpha-\beta)\log(\sqrt{2}\erfinv (\frac{1-\alpha}{1-\beta}))-  (\alpha-\beta)(\erfinv(\frac{1-\alpha}{1-\beta}))^2\nonumber \\
& = & (\alpha-\beta) (\log(\frac{1-\beta}{\alpha})+\log(\sqrt{\frac{2}{\pi}})-\log(\sqrt{2}\erfinv (\frac{1-\alpha}{1-\beta}))+  \log(e^{-(\erfinv(\frac{1-\alpha}{1-\beta}))^2})\nonumber \\
& = & (\alpha-\beta) \log (\frac{1-\beta}{\alpha}\sqrt{\frac{2}{\pi}}\frac{e^{-(\erfinv(\frac{1-\alpha}{1-\beta}))^2}}{\sqrt{2}\erfinv (\frac{1-\alpha}{1-\beta})})\nonumber \\
& = & 0\label{eq:finalunsign}
\end{eqnarray}
where the last equality follows by assumption (\ref{eq:checkcond1}). Since we obtained that $\psinet=0$ and we never contradicted assumption (\ref{eq:checkcond1}), the assumption must be correct. To be completely rigorous one should add that if an $\alpha$ is given and $\beta_w$ is such that pair $(\alpha,\beta_w)$ satisfies (\ref{eq:checkcond1}) then for any $\beta<\beta_w$ $AC_p^n$ is centrally $\beta n$-neighborly (i.e. one needs $\beta$ to be \emph{strictly} less than $\beta_w$ because (\ref{eq:conddonpolexp}) asserts that one actually needs $\psinet<0$).

We summarize the results from this section in the following theorem.

\begin{theorem}(Geometry-probability equivalence --- General $\x$)
Let $A$ in (\ref{eq:system}) be an $m\times n$ ortho-projector (or an $m\times n$ matrix
with the null-space uniformly distributed in the Grassmanian).
Let $k,m,n$ be large
and let $\alpha=\frac{m}{n}$ and $\beta_w=\frac{k}{n}$ be constants
independent of $m$ and $n$. Let $\erfinv$ be the inverse of the standard error function associated with zero-mean unit variance Gaussian random variable.  Further,
let $\alpha$ and $\beta_w$ be such that
\begin{equation}
\frac{1-\beta_w}{\alpha}\sqrt{\frac{2}{\pi}}\frac{e^{-(\erfinv(\frac{1-\alpha}{1-\beta_w}))^2}}{\sqrt{2}\erfinv (\frac{1-\alpha}{1-\beta_w})}=1.\label{eq:thmweaktheta}
\end{equation}
Then with overwhelming
probability polytope $AC_p^n$ will be centrally $\beta n$-neighborly for any $\beta<\beta_w$.

Further, let
the unknown $\x$ in (\ref{eq:system}) be $k$-sparse and let the location and signs of nonzero elements of $\x$ be arbitrarily chosen but fixed. Then, as shown in \cite{StojnicCSetam09}, for any $\beta<\beta_w$ one with overwhelming
probability has that the solution of (\ref{eq:l1}) is exactly the $\beta n$-sparse $\x$ in (\ref{eq:system}).\label{thm:thmweakthr}
\end{theorem}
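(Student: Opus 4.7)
The plan is to assemble the three ingredients already prepared in Sections~\ref{sec:l1neigh}--\ref{sec:external}: the Donoho angle-sum criterion (\ref{eq:conddonpol})/(\ref{eq:conddonpolexp}), which reduces central $k$-neighborliness of $AC_p^n$ with overwhelming probability to the condition $\psinet<0$; the closed-form expression (\ref{eq:evalpsiint3}) for $\psiint$ obtained under the ansatz (\ref{eq:sgamma}); and the closed-form expression (\ref{eq:evalpsiext1}) for $\psiext$ obtained under the ansatz (\ref{eq:yextguess}). Both ansatzes yield the stated exponents only conditionally on the algebraic identity (\ref{eq:checkcond1}), which is literally (\ref{eq:thmweaktheta}). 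So the proof reduces to three tasks: (i) certify that (\ref{eq:thmweaktheta}) validates both guesses; (ii) show that under (\ref{eq:thmweaktheta}) the sum $\psinet=\psicom-\psiint-\psiext$ vanishes at $\beta=\beta_w$; and (iii) promote the boundary equality to a strict inequality $\psinet<0$ for $\beta<\beta_w$, which is what (\ref{eq:conddonpolexp}) actually requires.

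For step (i), the external-angle ansatz is the easier half: Donoho already observes that the integrand in (\ref{eq:extang1}) is smooth and strictly convex in $y>0$, so its minimizer is characterized by vanishing of its derivative, and the computation in (\ref{eq:extang3}) shows that this derivative vanishes at the guess $y_{ext}=\erfinv((1-\alpha)/(1-\beta))$ precisely when (\ref{eq:thmweaktheta}) holds. For the internal angle one must verify that $s_\gamma=\sqrt{2}\,\erfinv((1-\alpha)/(1-\beta))$ solves (\ref{eq:intang3}); substituting and using $\Phi(s)=\tfrac{1}{2}(1-\erf(s/\sqrt{2}))$ reduces (\ref{eq:intang3}) literally to (\ref{eq:thmweaktheta}). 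Uniqueness of the relevant positive root of (\ref{eq:intang3}) should be argued by checking that the map $s\mapsto s\,\Phi(s)/\phi(s)$ is strictly monotone on $(0,\infty)$, so (\ref{eq:thmweaktheta}) pins down $s_\gamma$ unambiguously.

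Step (ii) is pure bookkeeping: plugging (\ref{eq:evalpsiint3}), (\ref{eq:evalpsiext1}) and (\ref{eq:conddonpolexp2}) into $\psinet$ and collecting like terms produces exactly the chain of identities already displayed in (\ref{eq:finalunsign}), ending with
\[
\psinet=(\alpha-\beta)\log L(\alpha,\beta),\qquad L(\alpha,\beta):=\frac{1-\beta}{\alpha}\sqrt{\frac{2}{\pi}}\frac{e^{-(\erfinv((1-\alpha)/(1-\beta)))^2}}{\sqrt{2}\,\erfinv((1-\alpha)/(1-\beta))},
\]
so $L(\alpha,\beta_w)=1$ by (\ref{eq:thmweaktheta}) gives $\psinet=0$ at the boundary. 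For step (iii), with $\alpha$ fixed I would differentiate $L(\alpha,\beta)$ in $\beta$, use standard identities for $\erfinv$, and show that $L$ is strictly increasing in $\beta$ in a neighborhood of $\beta_w$; together with $\alpha-\beta>0$ this yields $\psinet<0$ for every $\beta<\beta_w$, and then (\ref{eq:conddonpolexp}) delivers central $\beta n$-neighborliness of $AC_p^n$ with overwhelming probability. The second assertion of the theorem follows immediately, since the Donoho--Tanner equivalence (recalled after (\ref{eq:l1})) turns central $\beta n$-neighborliness of $AC_p^n$ into exact recovery by (\ref{eq:l1}) of every $\beta n$-sparse $\x$ with prescribed support and sign pattern, which is precisely the setting in which \cite{StojnicCSetam09} computes $\beta_w$.

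The main obstacle I expect is step (iii): controlling the sign of $L(\alpha,\beta)-1$ rigorously across the admissible range of $\beta$, since $L$ depends on $\beta$ both through the explicit prefactor $1-\beta$ and nonlinearly through the argument of $\erfinv$. A clean treatment likely requires the asymptotic identity $\Phi(s)\sim\phi(s)/s$ as $s\to\infty$ plus a monotonicity lemma for the ratio $\Phi(s)/(\phi(s)/s)$, combined with a one-sided check at $\beta=\beta_w$ that the derivative in $\beta$ has the right sign; ancillary but routine is the verification that the optimal $(\alpha,\beta_w)$ determined by (\ref{eq:thmweaktheta}) is unique, which is needed to identify it with the quantity computed in \cite{StojnicCSetam09}.
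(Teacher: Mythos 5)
Your steps (i) and (ii) are exactly the paper's argument: substituting the ansatz (\ref{eq:sgamma}) into (\ref{eq:intang3}) reduces it to (\ref{eq:checkcond1}), the stationarity check (\ref{eq:extang3}) validates (\ref{eq:yextguess}) under the same condition (with convexity of the external-angle objective turning stationarity into minimality), and the bookkeeping in (\ref{eq:finalunsign}) then gives $\psinet=0$ at the critical pair. The paper's proof is precisely this combination of (\ref{eq:checkcond1}) and (\ref{eq:finalunsign}), plus an appeal to \cite{StojnicCSetam09}; your added remarks (uniqueness of the positive root of (\ref{eq:intang3}) via monotonicity of $s\mapsto s\Phi(s)/\phi(s)$, and deducing the recovery claim for a fixed support/sign pattern from central neighborliness via the Donoho equivalence rather than citing \cite{StojnicCSetam09} directly) are harmless refinements.

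The genuine problem is your step (iii). The identity $\psinet=(\alpha-\beta)\log L(\alpha,\beta)$ does \emph{not} hold for general $\beta$: the closed forms (\ref{eq:evalpsiint3}) and (\ref{eq:evalpsiext1}) were derived under the assumption that $s_{\gamma}$ from (\ref{eq:sgamma}) actually solves (\ref{eq:intang3}) and that $y_{ext}$ from (\ref{eq:yextguess}) is the true minimizer in (\ref{eq:extang1}), and both of these are equivalent to $L(\alpha,\beta)=1$, i.e.\ they hold only on the critical curve. Away from it, plugging the guess into (\ref{eq:extang1}) only \emph{overestimates} $\psiext$, which (since $\psiext$ enters $\psinet$ with a minus sign) produces a \emph{lower} bound on $\psinet$ --- the wrong direction for proving $\psinet<0$ --- and plugging a non-solution $s$ into (\ref{eq:intang1})--(\ref{eq:intang2}) does not produce the internal-angle exponent at all. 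Hence differentiating $L$ in $\beta$ and showing it increases near $\beta_w$ says nothing about the sign of the true $\psinet$ for $\beta<\beta_w$. The strict inequality below the threshold has to come from the behavior in $\beta$ of the genuine exponents, which is part of Donoho's framework in \cite{DonohoPol} (the net exponent is negative strictly below its zero crossing); the paper handles this point only by the remark preceding the theorem --- that (\ref{eq:conddonpolexp}) requires $\psinet<0$, whence the restriction to $\beta$ strictly below $\beta_w$ --- and otherwise defers to \cite{DonohoPol}. Replace your monotonicity-of-$L$ argument by that appeal (or by a direct monotonicity argument for the true $\psinet(\beta;\alpha)$), and your proof coincides with the paper's.
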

\begin{proof}
Follows from the previous discussion through a combination of (\ref{eq:checkcond1}), (\ref{eq:finalunsign}), and the main results of \cite{StojnicCSetam09}.
\end{proof}

The results for the weak threshold obtained from the above theorem have been already plotted in \cite{StojnicCSetam09} and as it was mentioned in \cite{StojnicCSetam09}, they were in an excellent numerical agreement with the ones obtained in \cite{DonohoPol,DonohoUnsigned}
(for the completeness we present the results again in Figure \ref{fig:weak}). Finally, Theorem \ref{thm:thmweakthr} rigorously establishes that the agreement is not only numerical but also analytical and that the weak thresholds obtained in \cite{DonohoPol} and \cite{StojnicCSetam09} are indeed exactly equal to each other.
\begin{figure}[htb]
\centering
\centerline{\epsfig{figure=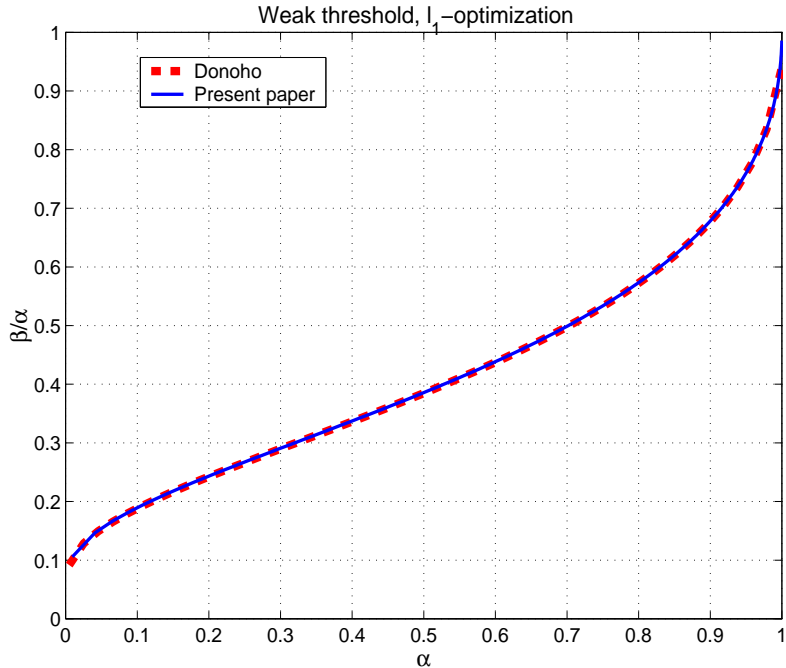,width=10.5cm,height=9cm}}
\caption{\emph{Weak} threshold, $\ell_1$-optimization}
\label{fig:weak}
\end{figure}

\section{Nonnegative $\x$}
\label{sec:signed}

\subsection{Success of $\ell_1$ and neighborliness of projected simplices}
\label{sec:l1neighnon}

In this section we consider a special case of (\ref{eq:system}). We will assume that the nonzero components of $\x$ in (\ref{eq:system}) are all of same sign (say they are all positive). If this is a priori known then instead of using (\ref{eq:l1}) to recover the ``nonnegative" $\x$ in (\ref{eq:system}) one can use (see, e.g. \cite{DonohoSigned,DT,StojnicCSetam09})
\begin{eqnarray}
\mbox{min} & & \|\x\|_{1}\nonumber \\
\mbox{subject to} & & A\x=\y\nonumber \\
& & \x_i\geq 0, 0\leq i\leq n. \label{eq:l1signed}
\end{eqnarray}
Since so to say more structure is imposed on $\x$ (and this structure is made known to the system's solver) one would expect that the recoverable thresholds should be higher in this case than they were in the case of ``general" sparse vectors $\x$. As demonstrated in \cite{DonohoSigned,DT,StojnicCSetam09} the thresholds are indeed higher. Moreover, although the results of \cite{DonohoSigned} and \cite{StojnicCSetam09} were obtained through completely different approaches as demonstrated in \cite{StojnicCSetam09} they happened to be in an excellent numerical agreement. In this section we will rigorously show that the agreement is not only numerical but also algebraic/analytical.

Before proceeding further we quickly recall on and appropriately modify the definition of the weak threshold. The definition of the weak threshold was already introduced in Section \ref{sec:back}. However, that definition was suited for the recovery of general vectors $\x$ considered in the previous section. Here, we slightly modify it so that it fits the scenario of a priori known sign patterns of nonzero elements of $\x$. For any given constant
$\alpha\leq 1$ and \emph{any} given $\x$ with a given fixed location of nonzero components and for which it is known that its nonzero components are say positive
there will be a maximum allowable value of $\beta$ such that
(\ref{eq:l1signed}) finds that given $\x$ in (\ref{eq:system}) with overwhelming
probability. We will refer to this maximum allowable value of
$\beta$ as the nonnegative \emph{weak threshold} and will denote it by $\beta_{w}^{+}$.

The story again starts from Donoho's classic \cite{DonohoUnsigned}. In a follow-up \cite{DonohoSigned} Donoho and Tanner made a key observation that a majority of what was done in \cite{DonohoUnsigned} and was related to the regular $n$-dimensional \emph{cross-polytope} would continue to hold in a slightly modified way if translated to the standard $n$-dimensional \emph{simplex}. In a more mundane language, in \cite{DonohoSigned} Donoho and Tanner took again a geometric approach but this time to the performance analysis of $\ell_1$-optimization from (\ref{eq:l1signed}) and again managed to establish a connection between the performance analysis of (\ref{eq:l1signed}) and the concepts of polytope's neighborliness. This time the main recognition went along slightly different lines:
1) Let $T^n$ be the standard $n$-dimensional simplex and let $AT^n$ be the polytope one obtains after projecting $T^n$ by
$A$; 2) Then the solution of (\ref{eq:l1signed}) will be exactly the \emph{nonnegative} $k$-sparse solution of
(\ref{eq:system}) if and only if polytope $AT^n$ is $k$-neighborly. For completeness we just briefly recall that a polytope is $k$-neighborly if its every $k+1$ vertices span its a $k$ dimensional face.

As in the previous section the above characterization then enables one to replace studying the success of (\ref{eq:l1signed}) in recovering the nonnegative sparse solution of an under-determined system by studying the neighborliness of the projected standard simplex. Of course, as earlier, it is not, a priori, clear that the latter problem is any easier then the former one. However, knowing the results of the previous section (and ultimately of course those of \cite{DonohoUnsigned}) one could now be tempted to believe that polytope type of characterization could be manageable. As discovered in \cite{DT}, it turns out that the neighborliness of randomly projected simplices has been explored to some extent in the literature on the geometry of random high-dimensional polytopes. As in the previous section, using the ``sum of angles" result from \cite{AS,PMM,Santalo} it was established in \cite{DT} that if $A$ is a random ortho-projector $AT^n$ will be $k$-neighborly with overwhelming probability if
\begin{equation}
n^{-1}\log(\comnon\internon(T^{k},T^{m})\externon(F^{m},T^n))<0 \label{eq:conddonpolnon}
\end{equation}
where $\comnon=\binom{n-k-1}{m-k}$, $\internon(T^{k},T^{m})$ is the internal angle at face $T^{k}$ of $T^{m}$, $\externon(T^{m},T^{n-1})$ is the external angle of $T^{n-1}$ at face $T^m$, and $T^{k}$, $T^{m}$, and $T^{n-1}$ are the standard $k$, $m$, and $(n-1)$ dimensional simplices, respectively. The authors in \cite{DT} then proceeded by establishing that (\ref{eq:conddonpolnon}) is equivalent to the following inequality related to the sum/difference of the exponents of $\comnon,\internon$, and $\externon$:
\begin{equation}
\psinetnon=\psicomnon-\psiintnon-\psiextnon<0 \label{eq:conddonpolexpnon}
\end{equation}
where
\begin{eqnarray}
\psicomnon & = & n^{-1}\log(\comnon)=(1-\beta)H(\frac{\alpha-\beta}{1-\beta})\nonumber \\
\psiintnon & = & n^{-1}\log(\internon(T^{k},T^{m})) \nonumber\\
\psiextnon & = & n^{-1}\log(\externon(T^{m},T^{n-1})) \label{eq:conddonpolexp1non}
\end{eqnarray}
and as earlier $H(p)=-p\log(p)-(1-p)\log(1-p)$ is the standard entropy function and $\log\binom{n}{pn}=e^{nH(p)}$ is the standard approximation of the binomial factor by the entropy function in the limit of $n\rightarrow\infty$. The rest of the approach from \cite{DT} is the analysis of the closed form expressions for $\internon(T^{k},T^{m})$ and $\externon(T^{m},T^{n-1})$ obtained/analyzed in various forms in \cite{BorockyHenk,Ruben,Ver}. In the following two subsections we will separately consider results Donoho and Tanner established for the internal and the external angle exponents. Relying on the insights from \cite{StojnicCSetam09} we will provide convenient characterizations of the exponents that will eventually help us establish the rigorous equivalence of results from \cite{DT} and \cite{StojnicCSetam09}.

\subsection{Internal angle --- nonnegative $\x$}
\label{sec:internalnon}

Just by simply looking at formulas (\ref{eq:conddonpolexp}) and (\ref{eq:conddonpolexpnon}) one can hardly see any difference between the definition of the external angle that we in this section and the one that we had in the previous section. The definitions are indeed the sam and the angle's exponents are indeed the same. However, characterizations that we will provide will differ.

To that end we recall on the procedure for determining the exponent of the internal angle $\psiintnon$ (the procedure is of course the same as the one from the previous section and is ultimately the one introduced in \cite{DonohoPol}). As earlier, let $\gamma=\frac{\beta}{\alpha}$ and for $s\geq 0$ let
\begin{eqnarray}
\Phi(s) & = & \frac{1}{\sqrt{2\pi}}\int_{s}^{\infty}e^{-\frac{x^2}{2}}dx\nonumber \\
\phi(s) & = & \frac{1}{\sqrt{2\pi}}e^{-\frac{s^2}{2}}.\label{eq:phisnon}
\end{eqnarray}
Then one has
\begin{equation}
\psiintnon(\beta,\alpha)=(\alpha-\beta)\xi_{\gamma}^{+}(y_{\gamma})+(\alpha-\beta)\log(2)\label{eq:intang1non}
\end{equation}
where
\begin{eqnarray}
y_{\gamma}^{+} & = & \frac{\gamma}{1-\gamma}s_{\gamma}^{+}\nonumber \\
\xi_{\gamma}^{+}(y_{\gamma}^{+}) & =  & -\frac{1}{2}(y_{\gamma}^{+})^2\frac{1-\gamma}{\gamma}-\frac{1}{2}\log(\frac{2}{\pi})+\log(\frac{y_{\gamma}^{+}}{\gamma}) \label{eq:intang2non}
\end{eqnarray}
and $s_{\gamma}^{+}\geq 0$ is the solution of
\begin{equation}
\Phi(s)=(1-\gamma)\frac{\phi(s)}{s}.\label{eq:intang3non}
\end{equation}
As earlier, if one can determine $s_{\gamma}^{+}$ then a combination of (\ref{eq:intang1non}) and (\ref{eq:intang2non}) would give a convenient closed form expression for the exponent $\psiintnon(\beta,\alpha)$. Finding $s_{\gamma}^{+}$ is equivalent to solving (\ref{eq:intang3non}) over $s$ which for a generic $\gamma$ seems to be possibly only numerically. As we have done in the previous section, we will at this point make again a guess and say that $t^{+}=(2\frac{1-\alpha}{1-\beta}-1)$ and
\begin{equation}
s_{\gamma}^{+}=\sqrt{2}\erfinv(t^{+})=\sqrt{2}\erfinv(2\frac{1-\alpha}{1-\beta}-1)\label{eq:sgammanon}
\end{equation}
where as earlier $\erfinv(\cdot)$ is the inverse of the error function $\erf(\cdot)$ associated with the standard normal random variable (i.e $\erfinv(\cdot)$ is the inverse of  $\erf(r)=\frac{2}{\sqrt{\pi}}\int_0^{r}e^{-q^2}dq$).
Of course, as it was the case earlier, $s_{\gamma}^{+}$ from (\ref{eq:sgammanon}) will not be the solution of (\ref{eq:intang3non}) for every $\gamma=\frac{\beta}{\alpha}$. However, we will again hope that it may be the solution of (\ref{eq:intang3non}) for the optimal $\gamma=\frac{\beta_w^{+}}{\alpha}$, i.e  for the one for which the net exponent, $\psinetnon$, in (\ref{eq:conddonpolexpnon}) is zero (again, strictly speaking, instead of ``zero" we should say ``smaller than $-\epsilon$ where $\epsilon>0$ is arbitrarily small"). This hope does not seem any more likely to succeed than the one we made in the previous section unless one is aware of the results of \cite{StojnicCSetam09}.

As it was the case in the previous section, we proceed by trying to confirm that our guess is actually right. We again start by noting that $\Phi(s)=\frac{1}{2}(1-\erf (\frac{s}{\sqrt{2}}))$.
If (\ref{eq:sgammanon}) is to be correct then to satisfy (\ref{eq:intang3non}) one must have
\begin{equation*}
\frac{1}{2}(1-t) =(1-\gamma)\frac{\phi(s)}{s}=(1-\gamma)\frac{1}{\sqrt{2\pi}}\frac{e^{-(\erfinv (t))^2}}{\sqrt{2}\erfinv (t)}
\end{equation*}
or in a more convenient algebraic form
\begin{equation}
\frac{1-\beta}{\alpha}\sqrt{\frac{1}{2\pi}}\frac{e^{-(\erfinv (2\frac{1-\alpha}{1-\beta}-1))^2}}{\sqrt{2}\erfinv (2\frac{1-\alpha}{1-\beta}-1)}=1.\label{eq:checkcond1non}
\end{equation}
Our logic from the previous section still remains in place. Namely, if it eventually turns out that for $\alpha$ and $\beta$ for which (\ref{eq:checkcond1non}) holds one also has that $\psinetnon$ in (\ref{eq:conddonpolexp}) is zero then we could claim that the guess we made for $s_{\gamma}^{+}$ in (\ref{eq:sgammanon}) is actually correct.

We now proceed with the evaluation of the ``internal exponent" $\psiintnon$ assuming that both (\ref{eq:sgammanon}) and (\ref{eq:checkcond1non}) are correct. Plugging (\ref{eq:sgammanon}) back in (\ref{eq:intang2non}) we obtain
\begin{equation}
y_{\gamma}^{+}  = \frac{\gamma}{1-\gamma}s_{\gamma}^{+}=\frac{\beta}{\alpha-\beta}\sqrt{2}\erfinv (2\frac{1-\alpha}{1-\beta}-1).\label{eq:evalpsiint1non}
\end{equation}
Further combination of (\ref{eq:intang2non}) and (\ref{eq:evalpsiint1non}) gives
\begin{eqnarray}
\xi_{\gamma}^{+}(y_{\gamma}^{+}) & = &  -\frac{1}{2}(y_{\gamma}^{+})^2\frac{1-\gamma}{\gamma}-\frac{1}{2}\log(\frac{2}{\pi})+\log(\frac{y_{\gamma}^{+}}{\gamma}) \nonumber \\
& = & -\frac{1}{2}\frac{\beta}{\alpha-\beta}(\sqrt{2}\erfinv (2\frac{1-\alpha}{1-\beta}-1))^2-\frac{1}{2}\log(\frac{2}{\pi})+\log(\frac{\alpha}{\alpha-\beta})+
\log(\sqrt{2}\erfinv (2\frac{1-\alpha}{1-\beta}-1)).\nonumber \\
\label{eq:evalpsiint2non}
\end{eqnarray}
Finally plugging $\xi_{\gamma}^{+}(y_{\gamma}^{+})$ computed in (\ref{eq:evalpsiint2non}) back in (\ref{eq:intang1non}) we have for the exponent of the internal angle
\begin{multline}
\psiintnon=-\frac{1}{2}\beta(\sqrt{2}\erfinv (2\frac{1-\alpha}{1-\beta}-1))^2-\frac{\alpha-\beta}{2}\log(\frac{2}{\pi})+(\alpha-\beta)\log(\alpha)\\-(\alpha-\beta)\log(\alpha-\beta)+
(\alpha-\beta)\log(\sqrt{2}\erfinv (2\frac{1-\alpha}{1-\beta}-1))+(\alpha-\beta)\log(2).
\label{eq:evalpsiint3non}
\end{multline}

\subsection{External angle}
\label{sec:external}

In this subsection we provide the external angle counterparts to results provided in the previous subsection for the internal angle. In \cite{DT} Donoho and Tanner established that the exponent of the external angle can be computed in the following way
\begin{equation}
\psiextnon(\beta,\alpha)=\min_{y\geq 0} (\alpha y^2 -(1-\alpha)\log(\frac{1}{2}(1+\erf(y)))).\label{eq:extang1non}
\end{equation}
It was also shown in \cite{DT} that function $(\frac{1}{2}(1+\erf(y)))$ is smooth and convex. As earlier, solving analytically the above minimization does not appear to be an easy task for a generic fixed $\alpha$ ($\beta$). However, we will again take a guess and assume that the solution of the above minimization is
\begin{equation}
y_{ext}^{+}=\erfinv(2\frac{1-\alpha}{1-\beta}-1).\label{eq:yextguessnon}
\end{equation}
It is again of course unreasonable to expect that this choice of $y$ would be the solution of the minimization problem in (\ref{eq:extang1non}) for every given $\alpha$. However, we do hope that it could be the solution for the optimal pair $(\alpha,\beta)$ (as stated above, the optimal pair $(\alpha,\beta)$ is the one that makes the net exponent $\psinetnon$ in (\ref{eq:conddonpolexpnon}) equal to zero). If $y_{ext}^{+}$ defined above is to be the solution of the minimization problem in (\ref{eq:extang1non}) for the optimal pair $(\alpha,\beta)$ then at the very least one has to have that
\begin{equation}
\frac{d(\alpha y^2 -(1-\alpha)\log(\frac{1}{2}(1+\erf(y))))}{dy} | _{y=y_{ext}^{+}}=0.\label{eq:extang2non}
\end{equation}
To check whether (\ref{eq:extang2non}) holds or not we write:
\begin{multline}
\frac{d(\alpha y^2 -(1-\alpha)\log(\frac{1}{2}(1+\erf(y))))}{dy} | _{y=y_{ext}^{+}}  =  (2\alpha y -\frac{1-\alpha}{\frac{1}{2}(1+\erf(y))}\frac{d\erf(y)}{2dy})| _{y=y_{ext}^{+}} \\
 =  2\alpha\erfinv(2\frac{1-\alpha}{1-\beta}-1)-(1-\beta)\frac{\sqrt{2}}{\sqrt{2\pi}}e^{-y^2}| _{y=y_{ext}^{+}} \\
 =  \sqrt{2}(\sqrt{2}\alpha\erfinv(2\frac{1-\alpha}{1-\beta}-1)-(1-\beta)\sqrt{\frac{1}{2\pi}}e^{-(\erfinv(2\frac{1-\alpha}{1-\beta}-1))^2})=0 \label{eq:extang3non}
\end{multline}
where the last equality follows by our assumption that $(\alpha,\beta)$ are optimal and therefore satisfy (\ref{eq:checkcond1non}). Since, (\ref{eq:extang3non}) shows that (\ref{eq:extang2non}) indeed holds one then has that if (\ref{eq:checkcond1non}) is correct then (\ref{eq:yextguessnon}) is correct as well.

Combination of (\ref{eq:extang1non}) and (\ref{eq:yextguessnon}) then gives us the following convenient characterization of the ``external exponent" $\psiextnon$:
\begin{equation}
\psiextnon=\alpha (y_{ext}^{+})^2 -(1-\alpha)\log(\frac{1}{2}(1+\erf(y_{ext}^{+}))))=\alpha(\erfinv(2\frac{1-\alpha}{1-\beta}-1))^2-(1-\alpha)\log(\frac{1-\alpha}{1-\beta}).
\label{eq:evalpsiext1non}
\end{equation}

\subsection{Net exponent}
\label{sec:netexp}

In this section we combine the expressions for the ``internal" and ``external" exponents obtained in (\ref{eq:evalpsiint3non}) and (\ref{eq:evalpsiext1non}), respectively, with the expression for the ``combinatorial" exponent given in (\ref{eq:conddonpolexp1non}). Before proceeding further
we first slightly modify the expression for the combinatorial exponent given in (\ref{eq:conddonpolexp1non}).
\begin{eqnarray}
\psicomnon & = & (1-\beta)H(\frac{\alpha-\beta}{1-\beta})\nonumber \\
& = & (1-\beta)(-\frac{\alpha-\beta}{1-\beta}\log(\frac{\alpha-\beta}{1-\beta})-(\frac{1-\alpha}{1-\beta})\log(\frac{1-\alpha}{1-\beta}))\nonumber \\
& = & -(\alpha-\beta)\log(\frac{\alpha-\beta}{1-\beta})-(1-\alpha)\log(\frac{1-\alpha}{1-\beta})
\label{eq:conddonpolexp2non}
\end{eqnarray}
Plugging the results from (\ref{eq:evalpsiint3non}), (\ref{eq:evalpsiext1non}), and (\ref{eq:conddonpolexp2non}) back in (\ref{eq:conddonpolexpnon}) one has
\begin{eqnarray}
\psinetnon & = & \psicomnon-\psiintnon-\psiextnon\nonumber \\
& = &  -(\alpha-\beta)\log(\frac{\alpha-\beta}{1-\beta})-(1-\alpha)\log(\frac{1-\alpha}{1-\beta})\nonumber \\
& - &
(-\frac{1}{2}\beta(\sqrt{2}\erfinv (2\frac{1-\alpha}{1-\beta}-1))^2-\frac{\alpha-\beta}{2}\log(\frac{2}{\pi})\nonumber \\ & + &(\alpha-\beta)\log(\alpha)- (\alpha-\beta)\log(\alpha-\beta)+
(\alpha-\beta)\log(\sqrt{2}\erfinv (2\frac{1-\alpha}{1-\beta}-1))+(\alpha-\beta)\log(2))\nonumber \\
& - & (\alpha(\erfinv(2\frac{1-\alpha}{1-\beta}-1))^2-(1-\alpha)\log(\frac{1-\alpha}{1-\beta})).
\end{eqnarray}
After canceling all terms that can be canceled one finally has
\begin{eqnarray}
\psinetnon & = & (\alpha-\beta)(\log(\frac{1-\beta}{\alpha})+\frac{1}{2}\log(\frac{1}{2\pi})-
\log(\sqrt{2}\erfinv (2\frac{1-\alpha}{1-\beta}-1))- (\erfinv(2\frac{1-\alpha}{1-\beta}-1))^2)\nonumber \\
& = & (\alpha-\beta) (\log(\frac{1-\beta}{\alpha})+\log(\sqrt{\frac{1}{2\pi}})-\log(\sqrt{2}\erfinv (2\frac{1-\alpha}{1-\beta}-1))+ \log(e^{-(\erfinv(2\frac{1-\alpha}{1-\beta}-1))^2})\nonumber \\
& = & (\alpha-\beta) \log (\frac{1-\beta}{\alpha}\sqrt{\frac{1}{2\pi}}\frac{e^{-(\erfinv(2\frac{1-\alpha}{1-\beta}-1))^2}}{\sqrt{2}\erfinv (2\frac{1-\alpha}{1-\beta}-1)})\nonumber \\
& = & 0\label{eq:finalsign}
\end{eqnarray}
where the last equality follows by assumption (\ref{eq:checkcond1non}). Since we obtained that $\psinetnon=0$ and we never contradicted assumption (\ref{eq:checkcond1non}), following the logic presented in the previous section, the assumption must be correct. Again, to be completely rigorous one should add that if an $\alpha$ is given and $\beta_w^{+}$ is such that pair $(\alpha,\beta_w^{+})$ satisfies (\ref{eq:checkcond1non}) then for any $\beta<\beta_w^{+}$ $AT^n$ is $\beta n$-neighborly (i.e. one needs $\beta$ to be \emph{strictly} less than $\beta_w^{+}$ because (\ref{eq:conddonpolexpnon}) asserts that one actually needs $\psinetnon<0$).

We summarize the results from this section in the following theorem.

\begin{theorem}(Geometry-probability equivalence --- Nonnegative $\x$)
Let $A$ in (\ref{eq:system}) be an $m\times n$ ortho-projector (or an $m\times n$ matrix
with the null-space uniformly distributed in the Grassmanian).
Let $k,m,n$ be large
and let $\alpha=\frac{m}{n}$ and $\beta_w^{+}=\frac{k}{n}$ be constants
independent of $m$ and $n$. Let $\erfinv$ be the inverse of the standard error function associated with zero-mean unit variance Gaussian random variable.  Further,
let $\alpha$ and $\beta_w^{+}$ be such that
\begin{equation}
\frac{1-\beta_w^{+}}{\alpha}\sqrt{\frac{1}{2\pi}}\frac{e^{-(\erfinv(2\frac{1-\alpha}{1-\beta_w^{+}}-1))^2}}{\sqrt{2}\erfinv (2\frac{1-\alpha}{1-\beta_w^{+}}-1)}=1.
\end{equation}
Then with overwhelming
probability polytope $AT^n$ will be $\beta n$-neighborly for any $\beta<\beta_w^{+}$.

Further, let
the unknown $\x$ in (\ref{eq:system}) be $k$-sparse and \emph{nonnegative} and let the location of nonzero components of $\x$ be arbitrarily chosen but fixed. Then, as shown in \cite{StojnicCSetam09}, for any $\beta<\beta_w^{+}$ one with overwhelming
probability has that the solution of (\ref{eq:l1signed}) is exactly the \emph{nonnegative} $\beta n$-sparse $\x$ in (\ref{eq:system}).\label{thm:thmweakthrnon}
\end{theorem}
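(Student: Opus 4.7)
The plan is to mirror, mutatis mutandis, the argument just carried out for the general $\x$ case in Section \ref{sec:unsigned}: by the Donoho--Tanner characterization (\ref{eq:conddonpolnon})--(\ref{eq:conddonpolexp1non}), establishing the first (geometric) assertion reduces to exhibiting, for each $\alpha\in(0,1)$, the critical $\beta_w^+$ at which the net exponent $\psinetnon$ vanishes, and then promoting this to $\psinetnon<0$ on the subthreshold region $\beta<\beta_w^+$. The second (algorithmic) assertion is obtained by quoting \cite{StojnicCSetam09}, where precisely the same transcendental equation was shown to characterize the nonnegative weak threshold.

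For the internal angle step, I would substitute the candidate (\ref{eq:sgammanon}), $s_{\gamma}^{+}=\sqrt{2}\,\erfinv(2(1-\alpha)/(1-\beta)-1)$, into the saddle-point equation (\ref{eq:intang3non}). Using $\Phi(s)=\tfrac{1}{2}(1-\erf(s/\sqrt{2}))$ together with the definition of $\phi$, (\ref{eq:intang3non}) rearranges into the algebraic condition (\ref{eq:checkcond1non}); conversely, any pair $(\alpha,\beta)$ satisfying (\ref{eq:checkcond1non}) turns the guess into the correct saddle. Plugging back into (\ref{eq:intang1non})--(\ref{eq:intang2non}) then yields the closed form $\psiintnon$ recorded in (\ref{eq:evalpsiint3non}).

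For the external angle, I would exploit the smoothness and convexity of $y\mapsto \alpha y^2-(1-\alpha)\log(\tfrac{1}{2}(1+\erf(y)))$ noted after (\ref{eq:extang1non}), so that stationarity alone characterizes the unique minimizer. Differentiating at the candidate $y_{ext}^+=\erfinv(2(1-\alpha)/(1-\beta)-1)$ produces (\ref{eq:extang3non}), which once more collapses to (\ref{eq:checkcond1non}); hence (\ref{eq:yextguessnon}) is optimal precisely on the critical locus, giving $\psiextnon$ as in (\ref{eq:evalpsiext1non}). With all three exponents in hand, I would rewrite $\psicomnon$ as in (\ref{eq:conddonpolexp2non}) and carry out the direct subtraction $\psicomnon-\psiintnon-\psiextnon$; the various $\log(\alpha-\beta)$, $\log((1-\alpha)/(1-\beta))$, and $\log(2/\pi)$ contributions combine (in particular, the $-(\alpha-\beta)\log(2)$ from $\psiintnon$ and the $\frac{\alpha-\beta}{2}\log(2/\pi)$ combine to $\frac{\alpha-\beta}{2}\log(1/(2\pi))$, noting that the nonnegative $\psicomnon$ lacks the $(\alpha-\beta)\log(2)$ term present in the unsigned case) so that $\psinetnon$ collapses to $(\alpha-\beta)$ times the logarithm of the left-hand side of (\ref{eq:checkcond1non}), which vanishes by hypothesis.

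The main obstacle, as in the general case, is not any one of these algebraic cancellations but rather the promotion of the boundary equality $\psinetnon=0$ at $\beta=\beta_w^+$ to the strict subthreshold inequality $\psinetnon<0$ for every $\beta<\beta_w^+$, which is what (\ref{eq:conddonpolexpnon}) actually demands in order to conclude $\beta n$-neighborliness of $AT^n$. Concretely, one must verify that, at fixed $\alpha$, the left-hand side of (\ref{eq:checkcond1non}), viewed as a function of $\beta$, crosses the value $1$ monotonically at $\beta_w^+$; this can be done by a direct derivative computation but is algebraically more delicate than the exponent cancellation. Granted this monotonicity, the geometric assertion follows from (\ref{eq:conddonpolexpnon}) and the neighborliness framework recalled in Section \ref{sec:l1neighnon}, and the algorithmic assertion follows immediately by citing \cite{StojnicCSetam09}.
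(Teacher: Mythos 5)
Your proposal follows essentially the same route as the paper: guess the saddle point $s_{\gamma}^{+}$ and the external-angle minimizer $y_{ext}^{+}$, reduce both to the single condition (\ref{eq:checkcond1non}), evaluate $\psiintnon$, $\psiextnon$, $\psicomnon$, cancel to get $\psinetnon=0$ on the critical locus, and then invoke the Donoho--Tanner neighborliness framework together with \cite{StojnicCSetam09}; your algebraic bookkeeping (including the $\log(2)$ and $\log(2/\pi)$ recombination into $\tfrac{1}{2}\log(1/(2\pi))$) matches (\ref{eq:finalsign}). The subthreshold monotonicity issue you flag is handled in the paper only by the brief remark that (\ref{eq:conddonpolexpnon}) requires $\beta$ strictly below $\beta_w^{+}$, so your treatment is, if anything, slightly more explicit about that point while being otherwise the same argument.
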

\begin{proof}
Follows from the previous discussion through a combination of (\ref{eq:checkcond1non}), (\ref{eq:finalsign}), and the main results of \cite{StojnicCSetam09}.
\end{proof}

The results for the weak threshold obtained from the above theorem have been already plotted in \cite{StojnicCSetam09} and as it was mentioned in \cite{StojnicCSetam09}, they were in an excellent numerical agreement with the ones obtained in \cite{DT,DonohoSigned}
(for the completeness we present the results again in Figure \ref{fig:weaksigned}). Finally, Theorem \ref{thm:thmweakthrnon} rigorously establishes that the agreement is not only numerical but also analytical and that the weak thresholds obtained in \cite{DT} and \cite{StojnicCSetam09} are indeed exactly equal to each other.
\begin{figure}[htb]
\centering
\centerline{\epsfig{figure=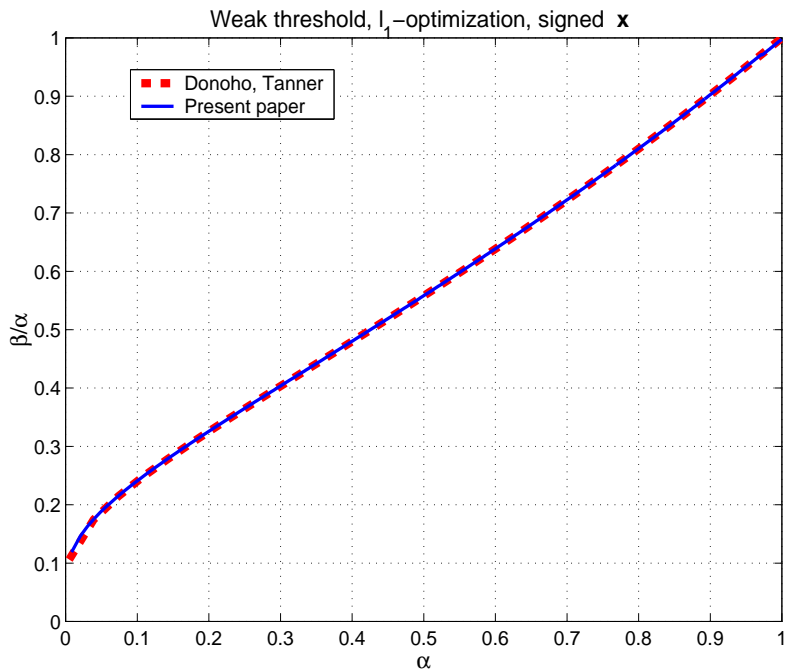,width=10.5cm,height=9cm}}
\vspace{-0.2in} \caption{\emph{Weak} threshold, $\ell_1$-optimization; signed $\x$}
\label{fig:weaksigned}
\end{figure}

\section{Discussion}
\label{sec:discuss}

In this paper we considered under-determined systems of linear equations with sparse solutions.
We focused on solving such systems via a classical polynomial-time
$\ell_1$-optimization algorithm. We also focused on random systems, i.e. on systems where the system matrix is random.

Two different approaches, the geometric one from \cite{DonohoPol} and the probabilistic one from \cite{StojnicCSetam09}, were considered. These approaches were known to provide characterizations of $\ell_1$-optimization success that are in excellent numerical agreement. Here we provided a rigorous proof that the recovery thresholds that one can obtain through one of these approaches are exactly the same as the ones that can be obtained through the other.

We also showed that this remains true when one restricts to under-determined systems with nonnegative and sparse solutions. Namely, we rigorously showed that the nonnegative recovery thresholds one can obtain through either of approaches \cite{DT} and \cite{StojnicCSetam09} are exactly the same as the ones that can be obtained through the other.

An interesting bonus is the following connection with the recent works \cite{DonMalMon09,BayMon10}. Namely, in \cite{DonMalMon09} a belief propagation type of algorithm is put forth as a faster alternative to the standard $\ell_1$-optimization. Its performance was analyzed through a state evolution formalism (which was later made rigorous in \cite{BayMon10}) and the recovery thresholds were computed. Moreover, it was shown in \cite{DonMalMon09} that these thresholds are the same as those computed in \cite{StojnicCSetam09}. The result of this paper confirms that the sparsity recovery abilities of the belief propagation algorithm from \cite{DonMalMon09} are exactly the same not only sa those from \cite{StojnicCSetam09} but also as those from \cite{DonohoPol} and \cite{DT} (an overwhelming numerical evidence of this was of course already presented in \cite{DonMalMon09}).

\begin{singlespace}
\bibliographystyle{plain}
\bibliography{Equivalence}

\begin{thebibliography}{10}

\bibitem{ALPTJ09}
R.~Adamczak, A.~E. Litvak, A.~Pajor, and N.~Tomczak-Jaegermann.
\newblock Restricted isometry property of matrices with independent columns and
  neighborly polytopes by random sampling.
\newblock {\em Preprint}, 2009.
\newblock available at arXiv:0904.4723.

\bibitem{AS}
F.~Afentranger and R.~Schneider.
\newblock Random projections of regular simplices.
\newblock {\em Discrete Comput. Geom.}, 7(3):219--226, 1992.

\bibitem{Tarokh}
M.~Akcakaya and V.~Tarokh.
\newblock A frame construction and a universal distortion bound for sparse
  representations.
\newblock {\em IEEE Trans. on Signal Processing}, 56(6), June 2008.

\bibitem{Bar}
R.~Baraniuk, M.~Davenport, R.~DeVore, and M.~Wakin.
\newblock A simple proof of the restricted isometry property for random
  matrices.
\newblock {\em Constructive Approximation}, 28(3), 2008.

\bibitem{BayMon10}
M.~Bayati and A.~Montanari.
\newblock The dynamics of message passing on dense graphs, with applications to
  compressed sensing.
\newblock {\em Preprint}.
\newblock available online at arXiv:1001.3448.

\bibitem{BorockyHenk}
K.~Borocky and M.~Henk.
\newblock Random projections of regular polytopes.
\newblock {\em Arch. Math. (Basel)}, 73(6):465--473, 1999.

\bibitem{Crip}
E.~Candes.
\newblock The restricted isometry property and its implications for compressed
  sensing.
\newblock {\em Compte Rendus de l'Academie des Sciences, Paris, Series I, 346},
  pages 589--59, 2008.

\bibitem{CRT}
E.~Candes, J.~Romberg, and T.~Tao.
\newblock Robust uncertainty principles: exact signal reconstruction from
  highly incomplete frequency information.
\newblock {\em IEEE Trans. on Information Theory}, 52:489--509, December 2006.

\bibitem{CT}
E.~Candes and T.~Tao.
\newblock Decoding by linear programming.
\newblock {\em IEEE Trans. on Information Theory}, 51:4203--4215, Dec. 2005.

\bibitem{CWBreweighted}
E.~Candes, M.~Wakin, and S.~Boyd.
\newblock Enhancing sparsity by reweighted l1 minimization.
\newblock {\em J. Fourier Anal. Appl.}, 14:877--905, 2008.

\bibitem{SChretien08}
S.~Chretien.
\newblock An alternating ell-1 approach to the compressed sensing problem.
\newblock 2008.
\newblock available online at http://www.dsp.ece.rice.edu/cs/.

\bibitem{CoMu05}
G.~Cormode and S.~Muthukrishnan.
\newblock Combinatorial algorithms for compressed sensing.
\newblock {\em SIROCCO, 13th Colloquium on Structural Information and
  Communication Complexity}, pages 280--294, 2006.

\bibitem{DaiMil08}
W.~Dai and O.~Milenkovic.
\newblock Subspace pursuit for compressive sensing signal reconstruction.
\newblock {\em Preprint}, page available at arXiv:0803.0811, March 2008.

\bibitem{DG08}
M.~E. Davies and R.~Gribonval.
\newblock Restricted isometry constants where ell-p sparse recovery can fail
  for $0 < p \leq 1$.
\newblock available online at http://www.dsp.ece.rice.edu/cs/.

\bibitem{DonohoUnsigned}
D.~Donoho.
\newblock Neighborly polytopes and sparse solutions of underdetermined linear
  equations.
\newblock 2004.
\newblock Technical report, Department of Statistics, Stanford University.

\bibitem{DonohoPol}
D.~Donoho.
\newblock High-dimensional centrally symmetric polytopes with neighborlines
  proportional to dimension.
\newblock {\em Disc. Comput. Geometry}, 35(4):617--652, 2006.

\bibitem{DonMalMon09}
D.~Donoho, A.~Maleki, and A.~Montanari.
\newblock Message-passing algorithms for compressed sensing.
\newblock {\em Proc. National Academy of Sciences}, 106(45):18914--18919, Nov.
  2009.

\bibitem{DT}
D.~Donoho and J.~Tanner.
\newblock Neighborliness of randomly-projected simplices in high dimensions.
\newblock {\em Proc. National Academy of Sciences}, 102(27):9452--9457, 2005.

\bibitem{DonohoSigned}
D.~Donoho and J.~Tanner.
\newblock Sparse nonnegative solutions of underdetermined linear equations by
  linear programming.
\newblock {\em Proc. National Academy of Sciences}, 102(27):9446--9451, 2005.

\bibitem{DTciss}
D.~Donoho and J.~Tanner.
\newblock Thresholds for the recovery of sparse solutions via $l_1$
  minimization.
\newblock {\em Proc. Conf. on Information Sciences and Systems}, March 2006.

\bibitem{DTjams2010}
D.~Donoho and J.~Tanner.
\newblock Counting faces of randomly projected polytopes when the projection
  radically lowers dimension.
\newblock {\em J. Amer. Math. Soc.}, 22:1--53, 2009.

\bibitem{DTDSomp}
D.~L. Donoho, Y.~Tsaig, I.~Drori, and J.L. Starck.
\newblock Sparse solution of underdetermined linear equations by stagewise
  orthogonal matching pursuit.
\newblock {\em 2007}.
\newblock available online at http://www.dsp.ece.rice.edu/cs/.

\bibitem{FL08}
S.~Foucart and M.~J. Lai.
\newblock Sparsest solutions of underdetermined linear systems via ell-q
  minimization for $0 < q \leq 1$.
\newblock available online at http://www.dsp.ece.rice.edu/cs/.

\bibitem{GiStTrVe06}
A.~Gilbert, M.~J. Strauss, J.~A. Tropp, and R.~Vershynin.
\newblock Algorithmic linear dimension reduction in the l1 norm for sparse
  vectors.
\newblock {\em 44th Annual Allerton Conference on Communication, Control, and
  Computing}, 2006.

\bibitem{GiStTrVe07}
A.~Gilbert, M.~J. Strauss, J.~A. Tropp, and R.~Vershynin.
\newblock One sketch for all: fast algorithms for compressed sensing.
\newblock {\em ACM STOC}, pages 237--246, 2007.

\bibitem{GN03}
R.~Gribonval and M.~Nielsen.
\newblock Sparse representations in unions of bases.
\newblock {\em IEEE Trans. Inform. Theory}, 49(12):3320--3325, December 2003.

\bibitem{GN04}
R.~Gribonval and M.~Nielsen.
\newblock On the strong uniqueness of highly sparse expansions from redundant
  dictionaries.
\newblock In {\em Proc. Int Conf. Independent Component Analysis (ICA'04)},
  LNCS. Springer-Verlag, September 2004.

\bibitem{GN07}
R.~Gribonval and M.~Nielsen.
\newblock Highly sparse representations from dictionaries are unique and
  independent of the sparseness measure.
\newblock {\em {A}ppl. {C}omput. {H}arm. {A}nal.}, 22(3):335--355, May 2007.

\bibitem{Grunbaum03}
B.~Grunbaum.
\newblock Convex polytopes.
\newblock {\em Springer-Verlag}, volume 221 of Graduate Texts in Mathematics,
  2003.

\bibitem{InRu08}
P.~Indyk and M.~Ruzic.
\newblock Fast and effective sparse recovery using sparse random matrices.
\newblock 2008.
\newblock avialable on arxiv.

\bibitem{JXHC08}
S.~Jafarpour, W.~Xu, B.~Hassibi, and R.~Calderbank.
\newblock Efficient compressed sensing using high-quality expander graphs.
\newblock available online at http://www.dsp.ece.rice.edu/cs/.

\bibitem{MaVe05}
I.~Maravic and M.~Vetterli.
\newblock Sampling and reconstruction of signals with finite rate of innovation
  in the presence of noise.
\newblock {\em IEEE Trans. on Signal Processing}, 53(8):2788--2805, August
  2005.

\bibitem{PMM}
P.~McMullen.
\newblock Non-linear angle-sum relations for polyhedral cones and polytopes.
\newblock {\em Math. Proc. Cambridge Philos. Soc.}, 78(2):247--261, 1975.

\bibitem{NT08}
D.~Needell and J.~A. Tropp.
\newblock {CoSaMP}: Iterative signal recovery from incomplete and inaccurate
  samples.
\newblock {\em Applied and Computational Harmonic Analysis}, 26(3):301--321,
  2009.

\bibitem{NeVe07}
D.~Needell and R.~Vershynin.
\newblock Unifrom uncertainly principles and signal recovery via regularized
  orthogonal matching pursuit.
\newblock {\em Foundations of Computational Mathematics}, 9(3):317--334, 2009.

\bibitem{FHicassp}
F.~Parvaresh and B.~Hassibi.
\newblock Explicit measurements with almost optimal thresholds for compressed
  sensing.
\newblock {\em IEEE ICASSP}, Mar-Apr 2008.

\bibitem{Ruben}
H.~Ruben.
\newblock On the geometrical moments of skew regular simplices in
  hyperspherical space; with some applications in geometry and mathematical
  statistics.
\newblock {\em Acta. Math. (Uppsala)}, 103:1--23, 1960.

\bibitem{Ver}
M.~Rudelson and R.~Vershynin.
\newblock Geometric approach to error correcting codes and reconstruction of
  signals.
\newblock {\em International Mathematical Research Notices}, 64:4019 -- 4041,
  2005.

\bibitem{SCY08}
R.~Saab, R.~Chartrand, and O.~Yilmaz.
\newblock Stable sparse approximation via nonconvex optimization.
\newblock {\em ICASSP, IEEE Int. Conf. on Acoustics, Speech, and Signal
  Processing}, Apr. 2008.

\bibitem{SaZh08}
V.~Saligrama and M.~Zhao.
\newblock Thresholded basis pursuit: Quantizing linear programming solutions
  for optimal support recovery and approximation in compressed sensing.
\newblock 2008.
\newblock available on arxiv.

\bibitem{Santalo}
L.~A. Santalo.
\newblock Geometria integral en espacios de curvatura constante.
\newblock {\em Rep. Argentina Publ. Com. Nac. Energia Atomica, Ser. Mat.},
  1952.

\bibitem{StojnicICASSP09}
M.~Stojnic.
\newblock A simple performance analysis of $\ell_1$-optimization in compressed
  sensing.
\newblock {\em ICASSP, International Conference on Acoustics, Signal and Speech
  Processing}, April 2009.

\bibitem{StojnicCSetam09}
M.~Stojnic.
\newblock Various thresholds for $\ell_1$-optimization in compressed sensing.
\newblock {\em submitted to IEEE Trans. on Information Theory}, 2009.
\newblock available at arXiv:0907.3666.

\bibitem{JATGomp}
J.~Tropp and A.~Gilbert.
\newblock Signal recovery from random measurements via orthogonal matching
  pursuit.
\newblock {\em IEEE Trans. on Information Theory}, 53(12):4655--4666, 2007.

\bibitem{JAT}
J.~A. Tropp.
\newblock Greed is good: algorithmic results for sparse approximations.
\newblock {\em IEEE Trans. on Information Theory}, 50(10):2231--2242, 2004.

\bibitem{VS}
A.~M. Vershik and P.~V. Sporyshev.
\newblock Asymptotic behavior of the number of faces of random polyhedra and
  the neighborliness problem.
\newblock {\em Selecta Mathematica Sovietica}, 11(2), 1992.

\bibitem{XHexpander}
W.~Xu and B.~Hassibi.
\newblock Efficient compressive sensing with determinstic guarantees using
  expander graphs.
\newblock {\em IEEE Information Theory Workshop}, September 2007.

\end{thebibliography}
\end{singlespace}

\end{document}